  \renewcommand{\algocf@captiontext}[2]{#1\algocf@typo. \AlCapFnt{}#2}
  \def\@algocf@capt@plain{top}
  \renewcommand{\algocf@makecaption}[2]{%
    \addtolength{\hsize}{\algomargin}%
    \sbox\@tempboxa{\algocf@captiontext{#1}{#2}}%
    \ifdim\wd\@tempboxa >\hsize
      \hskip .5\algomargin%
      \parbox[t]{\hsize}{\algocf@captiontext{#1}{#2}}
    \else%
      \global\@minipagefalse%
      \hbox to\hsize{\box\@tempboxa}
    \fi%
    \addtolength{\hsize}{-\algomargin}%
  }
  \newcommand{\titlepaper}{Non-parametric efficient causal mediation with
  intermediate confounders}
  \newcommand{\authorlist}{
      \author{I.~D\'IAZ}
      \affil{Division of Biostatistics, Department of Population
        Health Sciences, Weill Cornell Medicine\\
      425 East 61\textsuperscript{st} Street, New York, NY 10065
      \email{ild2005@med.cornell.edu}}

      \author{N.S.~HEJAZI}
      \affil{Division of Epidemiology \& Biostatistics, School of Public
       Health, and Center for Computational Biology, University of California,
       Berkeley,\\ 2121 Berkeley Way, Berkeley, CA 94720-7360
      \email{nhejazi@berkeley.edu}}

      \author{K.E.~RUDOLPH}
      \affil{Department of Epidemiology, Mailman School of Public Health,
       Columbia University,\\
      ADDRESS
      \email{kr2854@cumc.columbia.edu}}

      \author{M.J.~VAN DER LAAN}
      \affil{Division of Epidemiology \& Biostatistics, School of Public
       Health, and Department of Statistics, University of California,
       Berkeley,\\ 2121 Berkeley Way, Berkeley, CA 94720-7360
      \email{laan@berkeley.edu}}
  }
   \newtheorem{theorem}{Theorem}
   \newtheorem{lemma}{Lemma}
\theoremstyle{definition}}
\theoremstyle{definition}}
  \newcommand{\titlepaper}{Non-parametric efficient causal mediation
    with intermediate confounders}
  \newcommand{\authorlist}{
      \author[1]{Iv\'an D\'iaz \thanks{corresponding author:
          ild2005@med.cornell.edu}}
      \author[2,4]{Nima S.~Hejazi}
      \author[5]{Kara E.~Rudolph}
      \author[2,3,4]{Mark J.~van der Laan}
      \affil[1]{\small Division of Biostatistics, Department of Healthcare
           Policy \& Research, Weill Cornell Medicine}
      \affil[2]{\small Divsion of Epidemiology \& Biostatistics, School of
          Public Health, University of California, Berkeley}
      \affil[3]{\small Department of Statistics, University of
          California, Berkeley}
       \affil[4]{\small Center for Computational Biology,
          University of California, Berkeley}
      \affil[5]{\small Department of Epidemiology, Mailman School of
          Public Health, Columbia University}
  }
\theoremstyle{definition}}
\DeclareMathOperator{\expit}{expit}
\DeclareMathOperator{\logit}{logit}
\DeclareMathOperator{\var}{\mathsf{Var}}
\renewcommand{\P}{\mathsf{P}}
\newcommand{\m}{\mathsf{b}}
\newcommand{\p}{\mathsf{p}}
\newcommand{\q}{\mathsf{q}}
\newcommand{\g}{\mathsf{g}}
\newcommand{\e}{\mathsf{h}}
\newcommand{\uu}{\mathsf{u}}
\newcommand{\h}{\mathsf{c}}
\newcommand{\vv}{\mathsf{v}}
\newcommand{\rr}{\mathsf{r}}
\newcommand{\indep}{\mbox{$\perp\!\!\!\perp$}}
\newcommand{\dd}{\mathrm{d}}
\newcommand{\Pn}{\mathsf{P}_n}
\newcommand{\Ps}{\mathsf{P_c}}
\newcommand{\thetatmle}{\hat\theta_{\mbox{\scriptsize tmle}}}
\newcommand{\thetaos}{\hat\theta_{\mbox{\scriptsize os}}}
\newcommand{\one}{\mathds{1}}
\newcommand{\E}{\mathsf{E}}
\DeclarePairedDelimiterX{\norm}[1]{\lVert}{\rVert}{#1}
\tikzset{
>=stealth',
punkt/.style={
rectangle,
rounded corners,
draw=black, very thick,
text width=6.5em,
minimum height=2em,
text centered},
pil/.style={
->,
thick,
shorten <=2pt,
shorten >=2pt,}
}
\newcommand{\Vertex}[2]
{\node[minimum width=0.6cm,inner sep=0.05cm] (#2) at (#1) {$\footnotesize#2$};
}
\newcommand{\Vertexr}[2]
{\node[rectangle, draw, minimum width=0.6cm,inner sep=0.05cm] (#2) at (#1) {$\footnotesize#2$};
}
\newcommand{\ArrowR}[3]%
{ \begin{pgfonlayer}{background}
\draw[->,#3] (#1) to[bend right=30] (#2);
\end{pgfonlayer}
}
\newcommand{\ArrowL}[3]%
{ \begin{pgfonlayer}{background}
\draw[->,#3] (#1) to[bend left=45] (#2);
\end{pgfonlayer}
}
\newcommand{\EdgeL}[3]%
{ \begin{pgfonlayer}{background}
\draw[dashed,#3] (#1) to[bend right=-45] (#2);
\end{pgfonlayer}
}
\newcommand{\Arrow}[3]%
{ \begin{pgfonlayer}{background}
\draw[->,#3] (#1) -- +(#2);
\end{pgfonlayer}
}
\title{\titlepaper}
  \date{\today}
\begin{document}
\maketitle


\begin{abstract}
  Interventional effects for mediation analysis were proposed as a solution to
  the lack of identifiability of natural (in)direct effects in the presence of
  a mediator-outcome confounder affected by exposure. We present a theoretical
  and computational study of the properties of the interventional (in)direct
  effect estimands based on the efficient influence function (EIF) in the
  non-parametric statistical model. We use the EIF to develop two asymptotically
  optimal, non-parametric estimators that leverage data-adaptive regression for
  the estimation of nuisance parameters: a one-step estimator and a targeted
  minimum loss estimator. A free and open source \texttt{R} package implementing
  our proposed estimators is made available via GitHub. We further present
  results establishing the conditions under which these estimators are
  consistent, multiply robust, $n^{1/2}$-consistent and efficient. We illustrate
  the finite-sample performance of the estimators and corroborate our
  theoretical results in a simulation study. We also demonstrate the use of the
  estimators in our motivating application to elucidate the mechanisms behind
  the unintended harmful effects that a housing intervention had on adolescent
  girls' risk behavior.
\end{abstract}

\section{Introduction}

Treatment or exposure often affects an outcome of interest directly, or
indirectly by the mediation of some intermediate variables. Identifying and
quantifying the mechanisms underlying causal effects is an increasingly popular
endeavor in public health and the social sciences, as it can improve
understanding of both why and how treatments can be effective. The development
of tools for mediation analysis has occupied scientists across disciplines for
several decades. For example, the early work of~\citet{wright1921correlation,
wright1934method} on path analysis provided the foundations for the development
of some of the most commonly used mediation analysis
techniques~\citep{goldberger1972structural,baron1986moderator}. The advent of
novel causal inference frameworks such as non-parametric structural equation
models, directed acyclic graphs, and the
do-calculus~\citep{pearl1995causal,Pearl00}, as well as the approaches
of~\citet{Robins86}, \citet{spirtes2000causation}, \citet{dawid2000causal},
and~\cite{richardson2013single}, has uncovered important limitations of earlier
efforts focused on parametric structural equation models for mediation
analysis~\citep{pearl1998graphs,imai2010general}. Centrally, structural equation
models impose implausible assumptions on the data-generating mechanism, and are
thus of limited applicability to complex phenomena in biology, health,
economics, and the social sciences. For example, modern causal models have
revealed the failure of the widely popular method of~\citet{baron1986moderator}
in several important cases~\citep{cole2002fallibility}.

In recent decades, novel causal inference frameworks have allowed a number of
important developments in mediation analysis, including the non-parametric
decomposition of the average effect of a binary action into direct and indirect
effects~\citep{RobinsGreenland92,Pearl01}. The indirect effect quantifies the
effect on the outcome through the mediator and the direct effect quantifies the
effect through all other mechanisms. The identification of these natural
(in)direct effects relies on so-called \textit{cross-world} counterfactual
independencies, i.e., independencies on counterfactual variables indexed by
distinct hypothetical interventions. An important consequence of this necessary
assumption is that the natural (in)direct effect is not identifiable in
a randomized trial, which implies that scientific claims obtained from these
models are not falsifiable through
experimentation~\citep{Popper34,dawid2000causal,robins2010alternative}.
Furthermore, the required cross-world independencies are not satisfied in the
presence of mediator-outcome confounders that are affected by
treatment~\citep[hereby called intermediate confounders, see
][]{avin2005identifiability,tchetgen2014identification}. This represents
a fundamental limitation of the natural (in)direct effects, as such intermediate
variables are likely to be present in practice.

Recently, several techniques have been proposed to solve the aforementioned
problems. These methods may be divided in two classes. The first class attempts
to identify and estimate bounds on the natural (in)direct effects in the presence
of intermediate confounders~\citep[e.g.,][]{robins2010alternative,
tchetgen2014bounds, miles2015partial}, whereas the second class of methods is
concerned with alternative definitions of the (in)direct effects. For instance,
\citet{vansteelandt2012natural} argue that the natural (in)direct effects among
the treated are identified without cross-world counterfactual independencies. We
base our approach on a proposal first outlined by~\citet{petersen2006estimation}
and~\citet{van2008direct}, who argue that, in the absence of cross-world
assumptions, the standard identification formula for the natural direct effect
may be interpreted as a weighted average of controlled direct effects, with
weights given by a counterfactual distribution on the mediators. In subsequent
work, \citet{vanderweele2014effect} and \citet{vansteelandt2017interventional}
realized that the direct effect parameter of~\citet{petersen2006estimation}
corresponds to a decomposition of a total causal effect defined in terms of
a non-deterministic intervention on the mediator, and extended the methods to
the setting of intermediate confounders, calling the methods
\textit{interventional effects}~\citep{nguyen2019clarifying}. Specifically, the
interventional direct effect measures the effect of fixing the exposure at some
value $a'$ while drawing the mediator from those with exposure $a'$ (versus
$a^{\star}$), given covariates. The direct effect compares exposure $a'$ against
$a^{\star}$, with the mediator in both cases randomly drawn from those with
exposure $a^{\star}$, given covariates. \cite{zheng2017longitudinal} study the
identification and estimation of an interventional effect in which the mediator
is drawn from its counterfactual distribution conditional on baseline variables
and the counterfactual intermediate confounder. Interventional effects have also
been studied by~\citet{lok2016defining,lok2019causal},
\citet{rudolph2017robust}, and~\citet{didelez2006direct}, among several others.

Our contribution to the literature is a study of the optimality properties of
the estimand of the interventional (in)direct effects. We derive the efficient
influence function and use it to develop $n^{1/2}$-consistent non-parametric
estimators for the interventional effects of~\citet{vanderweele2014effect}.
Importantly, our estimators (i) can incorporate intermediate confounders, (ii)
can incorporate flexible data-adaptive regression for nuisance parameters, (iii)
can accommodate high-dimensional mediators, (iv) can achieve the non-parametric
efficiency bound, and (v) are multiply robust to estimation of combinations of
nuisance parameters. To the best of our knowledge, this is the first proposal of
such a method. The estimators proposed by~\citet{van2008direct}
and~\citet{zheng2012targeted} may be used to estimate interventional effects,
and satisfy properties (ii)-(v) but not (i). The weighting estimators
of~\citet{vanderweele2014effect} and the regression estimators
of~\citet{vansteelandt2017interventional} satisfy property (i) but not (ii)-(v).
In particular, we say that these estimators do not satisfy (iii) because they
would require estimation of a possibly high-dimensional conditional density on
the mediator. The estimators of \cite{rudolph2017robust} satisfy properties (i),
(ii), and (v), but not (iii) and (iv). The direct effects
of~\cite{zheng2017longitudinal} are not directly applicable to our motivating
example, as they measure causal pathways different from the paths of interest in
our motivating application (see Section~\ref{supp:remark:wen} in the
supplementary materials). Concurrent to the
development of this work, \citet{benkeser2020nonparametric} studied the
non-parametric efficiency bound for interventional effects and proposed
non-parametric efficient estimators that can leverage data-adaptive regression.
Unlike the estimators we propose, those of~\citet{benkeser2020nonparametric}
also require estimation of possibly high-dimensional conditional densities of
the mediator. Regarding property (iv), we note that competing estimators relying
on parametric assumptions can deliver better efficiency if the model-based
restrictions imposed are correct, but are otherwise inconsistent.

We develop two estimators inspired by the efficient influence function:
a one-step estimator that can accommodate data structures where at most one of
the mediator or intermediate confounder is high-dimensional, and a targeted
minimum loss (TML) estimator under the assumption that the intermediate
confounder is binary, such as in our motivating application. In addition to
aiding in the development of the estimators, our studies of the parameter
functional in a non-parametric model provide new insights regarding efficiency
and multiple robustness achievable in estimation of interventional effects. In
particular, we derive the non-parametric efficiency bound and discuss the
robustness properties of estimators based on the efficient influence function.
Finally, we provide a free and open source software implementation of our
proposed methodology, the \texttt{medoutcon}~\citep{hejazi2020medoutcon}
package, for the \texttt{R} language and environment for statistical
computing~\citep{R}.

\section{Notation and definition of (in)direct effects}

Let $A$ denote a categorical treatment variable, let $Y$ denote a continuous or
binary outcome, let $M$ denote a multivariate mediator, let $W$ denote a vector
of observed pre-treatment covariates, and let $Z$ denote a multivariate
mediator-outcome confounder affected by treatment. Our setup will allow for at
most one of $M$ and $Z$ to be high-dimensional, but not both. Let $O = (W, A,
Z, M, Y)$ represent the observed data, and let $O_1, \ldots, O_n$ denote a
sample of $n$ i.i.d.~observations of $O$. We formalize the definition of our
counterfactual variables using the following non-parametric structural equation
model (NPSEM), though equivalent methods may be developed by taking the
counterfactual variables as primitives. Assume the data-generating process
satisfies
\begin{align}\label{eq:npsem}
  W &= f_W(U_W); A = f_A(W, U_A); Z=f_Z(W, A, U_Z); \nonumber \\
  M &= f_M(W, A, Z, U_M); Y = f_Y(W, A, Z, M, U_Y).
\end{align}
In our illustrative example, we have the exclusion restrictions $M = f_M(W, Z,
U_M)$ and $Y = f_Y(W, Z, M, U_Y)$ --- despite this, we present general
methodology compatible with the model (\ref{eq:npsem}). Here, $U=(U_W,
U_A,\allowbreak U_Z, U_M, U_Y)$ is a vector of exogenous factors, and the
functions $f$ are assumed deterministic but unknown. We use $\P$ to denote the
distribution of $O$, and $\Ps$ to denote the distribution of $(O,U)$. We let the
true distribution of the data, $\P$, be an element of the non-parametric
statistical model defined as all continuous densities on $O$ with respect to
some dominating measure $\nu$. Let $\p$ denote the corresponding probability
density function, and $\mathcal{W, Z, M}$ denote the range of the respective
random variables. We let $\E$ and $\E_c$ denote corresponding expectation
operators, and define $\P f = \int f(o)\dd \P(o)$ for a given function $f(o)$.
We use $\Pn$ to denote the empirical distribution of $O_1, \ldots, O_n$.

In our discussion, we define several additional useful parameterizations. We use
$\g(a \mid w)$ to denote the true probability mass function of $A=a$ conditional
on $W = w$, and $\e(a \mid m, w)$ to denote the probability mass function of
$A=a$ conditional on $(M, W)=(m,w)$. We use $\m(a,z,m,w)$ to denote the true
outcome regression function $\E(Y \mid A = a,Z = z,M = m, W = w)$, and $\q(z
\mid a,w)$ and $\rr(z \mid a,m,w)$ to denote the corresponding true conditional
densities of $Z$. For a random variable $X$, we let $X_a$ denote the
counterfactual outcome observed in a hypothetical world in which $\P(A=a)=1$.
For example, we have $Z_a = f_M(W,a,U_Z)$, $M_a=f_M(W,a,Z_a,U_M)$, and
$Y_a=f_Y(W,a,Z_a,M_a,U_Y)$. Likewise, we let $Y_{a,m} =f_Y(W,a,Z_a,m,U_Y)$
denote the value of the outcome in a hypothetical world where $\P(A=a,M=m)=1$.
It will be useful sometimes to consider a representation of model
(\ref{eq:npsem}) in terms of the directed acyclic graph (DAG) in
Figure~\ref{fig:dag}.
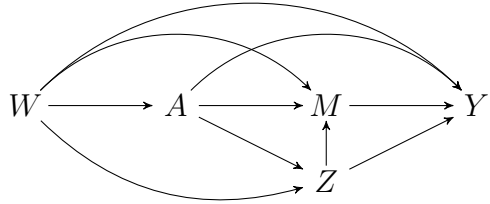
\begin{figure}[!htb]
  \centering
  \begin{tikzpicture}
    \Vertex{0, -1}{Z}
    \Vertex{-4, 0}{W}
    \Vertex{0, 0}{M}
    \Vertex{-2, 0}{A}
    \Vertex{2, 0}{Y}
    \ArrowR{W}{Z}{black}
    \Arrow{Z}{M}{black}
    \Arrow{W}{A}{black}
    \Arrow{A}{M}{black}
    \Arrow{M}{Y}{black}
    \Arrow{A}{Z}{black}
    \Arrow{Z}{Y}{black}
    \ArrowL{W}{Y}{black}
    \ArrowL{A}{Y}{black}
    \ArrowL{W}{M}{black}
  \end{tikzpicture}
  \caption{Directed Acyclic Graph.}
  \label{fig:dag}
\end{figure}

\subsection{Interventional mediation effects}

In this work, we define the total effect of $A$ on $Y$ in terms
of a contrast between two user-given values $a', a^{\star} \in \mathcal
A$. Figure \ref{fig:dag} reveals four paths involved in this effect,
namely $A\rightarrow Y$, $A\rightarrow M \rightarrow Y$,
$A\rightarrow Z \rightarrow Y$, and
$A\rightarrow Z \rightarrow M \rightarrow Y$ . One approach to
mediation analysis considers the \emph{natural direct effect
(NDE)} and the \emph{natural indirect effect (NIE)}, defined as
$\E_c(Y_{a',M_{a^{\star}}} - Y_{a^{\star}, M_{a^{\star}}})$ and
$\E_c(Y_{a',M_{a'}} - Y_{a',M_{a^{\star}}})$, respectively
\citep{RobinsGreenland92,Pearl01}. In terms of the NPSEM
(\ref{eq:npsem}), these effects are defined as
\begin{align*}
  \text{NDE} &= \E_c\{f_Y(W,a',Z_{a'},M_{a^{\star}},U_Y) -
               f_Y(W,a^{\star},Z_{a^{\star}},M_{a^{\star}},U_Y)\}\\
  \text{NIE} &= \E_c\{f_Y(W,a',Z_{a'},M_{a'},U_Y) -
               f_Y(W,a',Z_{a'},M_{a^{\star}},U_Y)\}.
\end{align*}
Inspection of the above definitions reveals that the natural direct effect
measures the effect through paths \emph{not} involving the mediator
($A\rightarrow Y$ and $A\rightarrow Z \rightarrow Y$), whereas the natural
indirect effect measures the effect through paths involving the mediator
($A\rightarrow M \rightarrow Y$ and $A\rightarrow Z \rightarrow M \rightarrow
Y$). This effect definition is appealing because the sum of the natural direct
and indirect effects equals the average treatment effect $\E_c(Y_1-Y_0)$.
Unfortunately, natural direct and indirect effects are not generally identified
in the presence of a mediator-outcome confounder affected by treatment such as
in the DAG in Figure
(\ref{fig:dag})~\citep{avin2005identifiability,tchetgen2014identification}. The
reason beyond the lack of identification is that the so-called cross-world
counterfactual independence $Y_{a', m}\indep M_{a^{\star}} \mid W$, necessary
for identification, cannot be satisfied in the presence of intermediate
confounders. Intuitively, this is because the interventions $A=a'$ and
$A=a^{\star}$ yield two counterfactual variables $Z_{a'}$ and $Z_{a^{\star}}$
which are not independent and enter the structural equation definition of
$Y_{a', m}$, and $M_{a^{\star}}$, respectively.

To solve this problem while retaining the path decomposition employed by the
natural direct and indirect effects, we adopt an approach previously
outlined~\citep{petersen2006estimation,van2008direct,
zheng2012targeted,vanderweele2014effect,rudolph2017robust}, defining direct and
indirect effects using stochastic interventions on the mediator. Let $G_a$
denote a random draw from the conditional distribution of $M_a$, conditional on
$W$. Consider the effect of $A$ on $Y$ defined as the difference in expected
outcome in hypothetical worlds in which $(A,M) = (a', G_{a'})$ versus $(A,M)
= (a^{\star}, G_{a^{\star}})$ with probability one, which may be decomposed into
direct and indirect effects as follows
\begin{equation}
\E_c(Y_{a', G_{a'}} - Y_{a^{\star}, G_{a^{\star}}}) =
      \underbrace{\E_c(Y_{a', G_{a'}} - Y_{a', G_{a^{\star}}})}_{\text{Indirect
          effect (through $M$)}} +
  \underbrace{\E_c(Y_{a', G_{a^{\star}}} - Y_{a^{\star},
      G_{a^{\star}}})}_{\text{Direct effect (not through $M$)}}
\label{eq:decomp}.
\end{equation}
Like the natural direct effect, this direct effect measures the effects through
paths not involving the mediator. Likewise, the indirect effect measures the
effect through paths involving the mediator. However, natural and interventional
mediation effects have different interpretations. The interventional indirect
effect measures the effect of fixing the exposure at $a'$ while setting the
mediator to a random draw $G_{a^{\star}}$ from those with exposure $a'$ versus
a random draw $G_{a'}$ from those with exposure $a^{\star}$, given covariates.
The interventional direct effect compares exposure $a'$ versus $a^{\star}$ with
the mediator in both cases set to a random draw $G_{a^{\star}}$ from those with
exposure $a^{\star}$, given covariates. This is different in nature to the
natural (in)direct effects. Specifically, the natural indirect effect measures
the effect of fixing the exposure at $a'$ while setting the mediator to the
value $Z_{a'}$ it would have taken under $a'$ for that unit, to the value
$Z_{a^{\star}}$ it would have taken under $a^{\star}$. The natural direct effect
compares exposure $a'$ versus $a^{\star}$ with the mediator in both cases fixed
to the unit-specific value $Z_{a^{\star}}$ it would have taken under
$a^{\star}$. Notably, the total effect obtained as the sum of the interventional
direct and indirect effects is different from the average treatment effect.
Further discussion of the interpretation of these effects may be found
in~\citet{vanderweele2014effect, vansteelandt2017interventional,
nguyen2019clarifying}.

Identification and optimal estimation of a similar effect
decomposition is investigated by~\citet{zheng2017longitudinal}. We
briefly discuss the relation of our work with their approach in
Section~\ref{supp:remark:wen} in the supplementary materials.

In the sequel, we focus on identification and estimation of $\theta_c
= \E_c(Y_{a', G_{a^{\star}}})$, from which the effect decomposition in
(\ref{eq:decomp}) can be obtained. \cite{vanderweele2014effect} show that, under
the assumptions
\begin{enumerate*}[label=(\roman*)]
\item $Y_{a,m}\indep A\mid W$,\label{ass:ncay}
\item $M_{a}\indep A\mid W$,\label{ass:ncam} and
\item $Y_{a,m}\indep M\mid (A,W,Z)$,\label{ass:ncmy}
\end{enumerate*}
$\theta_c$ is identified and is equal to
\begin{equation}
\theta = \int \m(a',z,m,w)\q(z\mid
  a',w)\p(m\mid a^{\star},w)\p(w)\dd \nu(w,z,m).\label{eq:thetadef}
\end{equation}
Assumption \ref{ass:ncay} states that, conditional on $W$, there is no
unmeasured confounding of the relation between $A$ and $Y$;
assumption~\ref{ass:ncam} states that conditional on $W$ there is no unmeasured
confounding of the relation between $A$ and $M$; and~\ref{ass:ncmy} states that
conditional on $(A,W,Z)$ there is no unmeasured confounding of the relation
between $M$ and $Y$. Next, we turn our attention to a discussion of efficiency
theory for estimation of the statistical parameter $\theta$, which depends only
on the observed data distribution $\P$.

\section{Optimal estimation of $\theta$ in the non-parametric model}

The \textit{efficient influence function} (EIF) is a key object in general
semi-parametric estimation theory, as it characterizes the asymptotic behavior
of all regular and efficient estimators~\citep{Bickel97,van2002part}. The EIF is
often useful in constructing locally efficient estimators. Some of the most
common approaches for this are (i) using the EIF as an estimating
equation~\citep[e.g.,][]{vanderLaan2003}, (ii) using the EIF in a one-step bias
correction~\citep[e.g.,][]{pfanzagl1982contributions}, and (iii) using the EIF
to construct targeted minimum loss estimators~\citep{vdl2006targeted,
vanderLaanRose11,vanderLaanRose18}.
The EIF estimating equation often enjoys desirable properties such as multiple
robustness, which allows for some components of the data distribution to be
inconsistently estimated while preserving consistency of the estimator. In
addition, the asymptotic analysis of estimators constructed using the EIF often
yields second-order bias terms, which require slow convergence rates (e.g.,
$n^{-1/4}$) for the nuisance parameters involved, thereby enabling the use of
flexible regression techniques in estimating these quantities.

\begin{theorem}[Efficient influence function]\label{theo:eif}
  For fixed $a'$, $a^{\star}$ define
\begin{equation}
  \begin{split}
    \h(a, z, m, w) & = \frac{\p(m\mid a^{\star}, w)}{\p(m\mid a, z, w)}\\
    \uu(z,a,w) &=\int_{\mathcal
      M}\m(a,z,m,w)\p(m\mid a^{\star},w)\dd\nu(m)\\
    \vv(a,w)&=\int_{\mathcal
      M\times Z}\m(a',z,m,w)\q(z\mid
    a',w)\p(m\mid a,w)\dd\nu(m,z).
  \end{split}\label{eq:defhuv}
\end{equation}
The efficient influence function for $\theta$ in the nonparametric model $M$
is equal to $D_\P(o)-\theta$, where
\begin{align}
  D_\P(o) &= \frac{\one\{a=a'\}}{\g(a'\mid w)}\h(a',z,m,w)\{y - \m(a',z,m,w)\}\label{eq:DY}\\
            & + \frac{\one\{a=a'\}}{\g(a'\mid w)}\left\{\uu(z,a',w)-\int_{\mathcal
              Z}\uu(z,a',w)\q(z\mid a',w)\dd\nu(z)\right\}\label{eq:Du}\\
            & + \frac{\one\{a=a^{\star}\}}{\g(a^{\star}\mid w)}\left\{\int_{\mathcal
              Z}\m(a',z,m,w)\q(z\mid a',w)\dd\nu(z)-\vv(a^{\star},
              w)\right\}\label{eq:Dv}\\
            & + \vv(a^{\star},w)\notag.
\end{align}
\end{theorem}

Inspection of this theorem reveals that computation of the EIF requires
estimation of possibly high-dimensional densities on the mediator $M$ and
confounder $Z$, as well as integrals with respect to these densities. This poses
an important challenge due to the curse of dimensionality. Fortunately, if
either one of $M$ or $Z$ is low-dimensional, we can overcome this problem using
an alternative parameterization of the densities. As $Z$ is binary in our
motivating application, we assume in the following that $Z$ is low-dimensional,
allowing its conditional density to be easily estimated, while we allow $M$ to
be high-dimensional. The case of high-dimensional $Z$ and low-dimensional $M$ is
treated in the \href{sm}{Supplementary Materials}. The EIF given in
Theorem~\ref{theo:eif} may be represented in terms of the expressions given in
Lemma~\ref{lemma:aeif}, which does not depend on conditional densities or
integrals on the mediator.

\begin{lemma}[Alternative representation of the EIF for
  high-dimensional $M$]\label{lemma:aeif}
The functions $\h$, $\uu$, and $\vv$ admit the following alternative
representations:
  \begin{align}
    \h(a, z, m, w)&=\frac{\g(a\mid w)}{\g(a^{\star}\mid w)}
                    \frac{\q(z\mid a,w)}{\rr(z\mid a,m,w)}
      \frac{\e(a^{\star}\mid m, w)}{\e(a\mid m,w)}\label{eq:param}\\
    \uu(z,a,w) &= \E\left\{\m(A,Z,M,W)\h(A,Z,M,W),\bigg|\,
      Z=z,A=a,W=w\right\},\label{eq:phi}\\
    \vv(a,w) &= \E\left\{\int_{\mathcal
               Z}\m(a',z,M,W)\q(z\mid a',W)\dd\nu(z)\,\bigg|\, A=a,W=w\right\}.
   \label{eq:lambda}
  \end{align}
\end{lemma}
In the following, we denote $\eta=(\m,\g,\e,\q,\rr,\uu,\vv)$ and $D_\P(o)
= D_\eta(o)$. This choice of parameterization has important consequences for the
purpose of estimation, as it helps to bypass estimation of the (possibly
high-dimensional) conditional density of the mediators, instead allowing for the
use of regression methods, which are far more commonly found in the statistics
literature and software, to be used for estimation of the relevant quantities.
We note that this approach will still require the estimation of possibly
multivariate or continuous densities on $Z$. If $Z$ is continuous, its density
may be estimated using parametric methods, data-adaptive techniques such as
those developed by~\citet{vapnik2000svm, Diaz11, izbicki2017converting}, or
ensembles of the previous methods~\citep{vanderLaanDudoitvanderVaart06}. If $Z$
is a multivariate vector, its density may be estimated by sequentially
conditioning on each component of the vector, e.g., $Z_1 \mid Z_2, Z_3,\ldots$,
$Z_2 \mid Z_3, Z_4,\ldots$, and so on, although we expect this method to work
reliably only in low-dimensional settings.

In the sequel, we will use $\eta_1=(\m_1,\g_1,\e_1,\q_1,\rr_1,\uu_1,\vv_1)$ to
denote any value of the nuisance parameters, possibly different from the true
value $\eta$ (e.g., the probability limit of an estimator $\hat\eta$). In
addition to the expression for the EIF in Lemma~\ref{lemma:aeif}, it is
important to understand the behavior of the difference $\P D_{\eta_1} - \theta$,
which is expected to yield a second-order term in differences $\eta_1-\eta$, so
that consistent estimation of $\theta$ is possible under consistent estimation
of certain configurations of the parameters in $\eta$. As we will see in
Theorems~\ref{theo:asos} and~\ref{theo:astmle}, this second-order term is
fundamental in the construction of consistent asymptotically normal (CAN)
estimators. Theorem~\ref{supp:theo:mr}, found in the \href{sm}{Supplementary
Materials}, shows this second-order term. The following lemma is a direct
consequence.
\begin{lemma}[Multiple robustness of $D_\eta(O)$]\label{lemma:robust}
  Let $\eta_1=(\m_1,\g_1,\e_1,\q_1,\rr_1,\uu_1,\vv_1)$ be such that one
  of the following conditions hold:
  \begin{enumerate}[label=(\roman*)]
  \item $\vv_1=\vv$ and either $(\q_1,\e_1,\rr_1)=(\q,\e,\rr)$
    or $(\m_1,\q_1)=(\m,\q)$ or $(\m_1,\uu_1)=(\m,\uu)$, or
  \item $\g_1=\g$ and either $(\q_1,\e_1,\rr_1)=(\q,\e,\rr)$
    or $(\m_1,\q_1)=(\m,\q)$ or $(\m_1,\uu_1)=(\m,\uu)$.
  \end{enumerate}
  Then $\P D_{\eta_1} = \theta$ with $D_\eta$ defined as in
  Theorem~\ref{lemma:aeif}.
\end{lemma}
Lemma~\ref{lemma:robust} implies that it is possible to construct consistent
estimators for $\theta$ under consistent estimation of the nuisance parameters
in $\eta$ in the configurations described therein. We note that the cases
$(\m_1,\vv_1,\uu_1)=(\m,\vv,\uu)$ and $(\m_1,\g_1,\uu_1)=(\m,\g,\uu)$ may be
uninteresting if the re-parametrization in Lemma~\ref{lemma:aeif} is used to
estimate the EIF, since, in that case, consistent estimation of $\uu$ and $\vv$
will generally require consistent estimation of $(\m,\q,\rr,\e)$, in addition to
the outer conditional expectations in Equations (\ref{eq:phi}) and
(\ref{eq:lambda}). In the next section, we discuss the construction of
$n^{1/2}$-consistent non-parametric estimators using one-step and targeted
minimum loss estimators.

\section{Efficient estimation}\label{est}

We will use $\hat \eta$ to denote an estimator of
$\eta=(\m,\g,\e,\q,\rr,\uu,\vv)$, noting that all of these parameters are
conditional expectations which may be estimated using data-adaptive regression
techniques or ensembles thereof. For estimation of $\uu$ and $\vv$, the
estimators of $(\m,\g,\q,\e)$ may be plugged in to the outcome variable defined
in Equations (\ref{eq:phi}) and (\ref{eq:lambda}). Then, any data-adaptive
regression method may be used to estimate the outer expectation. For example,
$\uu(z,a',w)$ may be estimated by computing the auxiliary covariate
$\hat\m(A,Z,M,W)\hat \h(A,Z,M,W)$, regressing it on $(Z,A,W)$, and evaluating
the predictor at $(z,a',w)$.


In order to avoid imposing Donsker conditions on the initial estimators, we will
use cross-fitting~\citep{klaassen1987consistent,zheng2011cross,
chernozhukov2016double} in the estimation procedure. Let ${\cal V}_1, \ldots,
{\cal V}_J$ denote a random partition of the index set $\{1, \ldots, n\}$ into
$J$ prediction sets of approximately the same size. That is, ${\cal V}_j\subset
\{1, \ldots, n\}$; $\bigcup_{j=1}^J {\cal V}_j = \{1, \ldots, n\}$; and ${\cal
V}_j\cap {\cal V}_{j'} = \emptyset$. In addition, for each $j$, the associated
training sample is given by ${\cal T}_j = \{1, \ldots, n\} \setminus {\cal
V}_j$. We denote by $\hat \eta_{j}$ the estimator of $\eta$, obtained by
training the corresponding prediction algorithm using only data in the sample
${\cal T}_j$. Further, we let $j(i)$ denote the index of the validation set
containing observation $i$.

\subsection{One-step estimator}\label{one_step}

The one-step estimator is defined as the solution to the cross-fitted efficient
influence function estimating equation:
\begin{equation}\label{eq:aipw}
  \thetaos = \frac{1}{n} \sum_{i = 1}^n D_{\hat\eta_{j(i)}}(O_i).
\end{equation}
Empirical process theory may be used to derive the asymptotic distribution of
$\thetaos$. Asymptotic normality and efficiency of the estimator is detailed in
the following theorem:
\begin{theorem}[Weak convergence of the one-step estimator]\label{theo:asos} Let
  $\lVert \cdot \rVert$ denote the $L_2(\P)$ norm defined as
  $\lVert f \rVert^2=\int f^2\dd\P$. Assume
  \begin{enumerate}[label=(\roman*)]
  \item \label{ass:pos} Positivity of exposure and confounder probabilities:
    \[\P\{\g(a^{\star} \mid W) > \epsilon\}=\P\{\rr(Z\mid a',M,W) >
    \epsilon\}=\P\{\e(a'\mid M,W) > \epsilon\}=1\] for some
    $\epsilon > 0$.
  \item \label{ass:so} $n^{1/2}$-convergence of second order terms:
    \[\lVert \hat\vv-\vv \rVert\, \lVert \hat\g-\g \rVert +
      \lVert \hat\m-\m \rVert \{\lVert \hat\q-\q \rVert +
      \lVert \hat\e - \e \rVert + \lVert \hat\rr - \rr \rVert\} +
      \lVert \hat\uu - \uu \rVert\, \lVert \hat\q-\q \rVert =
      o_\P(n^{-1/2}).\]
  \end{enumerate}
Then, we have
\[n^{1/2}(\thetaos-\theta)\rightsquigarrow N(0,\sigma^2),\] where
$\sigma^2=\var\{D_\eta(O)\}$ is the non-parametric efficiency bound.
\end{theorem}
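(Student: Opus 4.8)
The plan is to follow the structure of the proof of Theorem~\ref{main:theo:asos} essentially verbatim, substituting the multiple-robustness expansion behind Lemma~\ref{lemma:robust1} (for the redefined $\q,\rr,\uu,\vv$) wherever that argument invoked Theorem~\ref{theo:secondo}, and substituting assumption \ref{ass:so} of the present theorem for the corresponding second-order condition. First I would note that, since the one-step estimator equals $\thetaos=\frac{1}{J}\sum_{j=1}^J\Pnj D_{\hat\eta_j}$ and $\theta=\P D_\eta$ with $\P D_\eta=\theta$ by Lemma~\ref{lemma:robust1}, one has
\[\sqrt{n}(\thetaos-\theta)=\Gn(D_\eta-\theta)+R_{n,1}+R_{n,2},\]
where $R_{n,1}=J^{-1/2}\sum_{j=1}^J\Gnj(D_{\hat\eta_j}-D_\eta)$ and $R_{n,2}=n^{1/2}J^{-1}\sum_{j=1}^J\P(D_{\hat\eta_j}-\theta)$. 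The leading term satisfies $\Gn(D_\eta-\theta)\rightsquigarrow N(0,\var\{D_\eta(O)\})$ by the central limit theorem, and $\var\{D_\eta(O)\}$ is the non-parametric efficiency bound because, by Lemma~\ref{lemma:aeif}, $D_\eta-\theta$ is a representation of the efficient influence function. It therefore remains to show $R_{n,1}=o_P(1)$ and $R_{n,2}=o_P(1)$, after which Slutsky's theorem delivers the conclusion.

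For $R_{n,1}$ I would argue conditionally on the training sample ${\cal T}_j$, so that $D_{\hat\eta_j}$ is a fixed function and ${\cal F}_n^j=\{D_{\hat\eta_j}-D_\eta\}$ is a one-element class. Exactly as in the display (\ref{eq:empproc}), Theorem 2.14.2 of \cite{vanderVaart&Wellner96} applied with envelope $F_n^j=\lvert D_{\hat\eta_j}-D_\eta\rvert$, together with the bracketing-entropy bound of Theorem 2.7.2 of \cite{vanderVaart&Wellner96}, yields $E\{\sup_{f\in{\cal F}_n^j}|\Gnj f|\mid{\cal T}_j\}\lesssim\lVert F_n^j\rVert+2\lVert F_n^j\rVert^{1/2}$. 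Here $\lVert F_n^j\rVert=o_P(1)$: each nuisance estimator is consistent in $L_2(\P)$, and by positivity (assumption \ref{ass:pos}) the inverse-probability weights entering $D_{\hat\eta_j}$ are bounded, so $D_{\hat\eta_j}\to D_\eta$ in $L_2(\P)$. Hence $\Gnj(D_{\hat\eta_j}-D_\eta)=o_P(1)$ for each $j$, and summing over the $J$ folds gives $R_{n,1}=o_P(1)$.

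For $R_{n,2}$ the essential ingredient is the second-order (product) structure of $\P D_{\hat\eta_j}-\theta$. The expansion proving Lemma~\ref{lemma:robust1} --- the analogue of Theorem~\ref{theo:secondo} for the high-dimensional-$Z$ parameterization --- writes $\P D_{\hat\eta_j}-\theta$ as a finite sum of integrals, each a product of two nuisance estimation errors of the forms $(\hat\vv-\vv)(\hat\g-\g)$, $(\hat\m-\m)(\hat\q-\q)$, $(\hat\m-\m)(\hat\rr-\rr)$, and $(\hat\uu-\uu)(\hat\q-\q)$, against multipliers bounded by assumption \ref{ass:pos}. Applying the Cauchy--Schwarz inequality term by term bounds $|\P D_{\hat\eta_j}-\theta|$ by a fixed constant times
\[\lVert\hat\vv-\vv\rVert\,\lVert\hat\g-\g\rVert+\lVert\hat\m-\m\rVert\{\lVert\hat\q-\q\rVert+\lVert\hat\rr-\rr\rVert\}+\lVert\hat\uu-\uu\rVert\,\lVert\hat\q-\q\rVert,\]
which is $o_P(n^{-1/2})$ by assumption \ref{ass:so}; hence $R_{n,2}=n^{1/2}\,o_P(n^{-1/2})=o_P(1)$.

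The main obstacle is the algebra for $R_{n,2}$: one must carry out the expansion of $\P D_{\hat\eta_j}-\theta$ for the redefined nuisances $\q,\rr,\uu,\vv$ --- the computation advertised as ``analogous to the proof of Theorem~\ref{theo:secondo}'' but now involving the density ratio $\ddd$ and the new $\uu$ and $\vv$ --- and verify that it collapses into exactly the four product terms above. Once that identity is established, the remainder of the proof is routine: Cauchy--Schwarz plus assumption \ref{ass:so} dispatch $R_{n,2}$, and the cross-fitted empirical-process argument handles $R_{n,1}$, its only delicate point being the appeal to positivity to conclude $\lVert F_n^j\rVert=o_P(1)$.
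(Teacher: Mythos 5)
Your proposal is correct and follows essentially the same route as the paper: the paper proves this theorem by declaring it identical to the proof of Theorem~\ref{main:theo:asos}, which uses exactly your decomposition into $\Gn(D_\eta-\theta)+R_{n,1}+R_{n,2}$, the cross-fitted bracketing-entropy argument (Theorems 2.14.2 and 2.7.2 of \cite{vanderVaart&Wellner96}) for $R_{n,1}$, and the multiple-robustness expansion plus Cauchy--Schwarz and assumptions \ref{ass:pos}--\ref{ass:so} for $R_{n,2}$. Your identification of the unwritten algebra (the analogue of Theorem~\ref{theo:secondo} for the redefined nuisances) as the only nontrivial gap matches the paper's own level of detail, which likewise leaves that expansion implicit.
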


Condition~\ref{ass:pos} of the theorem is standard in causal inference, simply
stating that there is enough experimentation in the data-generating mechanism
such that there are no deterministic relations between the variables in the left
and right hand sides of the conditioning statement. We note in particular that
the positivity assumption regarding the density of the mediator $\rr$ is likely
to be violated if $Z$ is high-dimensional. Condition~\ref{ass:so} can be
satisfied, for example, if all the nuisance parameters involved converge to
their true values at $n^{1/4}$-rate in $L_2(\P)$-norm. This convergence rate is
much slower than the typical $n^{1/2}$ parametric rate. To increase assurance
that this assumption is satisfied, we depart from the classical but unrealistic
parametric setting by allowing usage of flexible data-adaptive estimators from
the machine and statistical learning literatures. The required rates are
achievable by many data-adaptive regression algorithms. See, for example,
\citet{bickel2009simultaneous} for rate results on $\ell_1$ regularization,
\citet{wager2015adaptive} for rate results on regression trees, and
\citet{chen1999improved} for neural networks. The assumption may also be
satisfied by the highly adaptive lasso \citep[HAL,][]{benkeser2016highly} under
the condition that the true regression functions are right-hand continuous with
left-hand limits and have sectional variation norm bounded by a constant.
Ensemble learners such as the Super Learner~\citep{vdl2007super} may also be
used. Super learning builds a weighted combination of predictors in a user-given
library of candidate estimators, where the weights are assigned so as to
minimize the cross-validated risk of the resulting combination; moreover, it has
important theoretical guarantees~\citep{vanderLaanDudoitvanderVaart06,
vanderVaartDudoitvanderLaan06}, such as asymptotic equivalence to the oracle
selector. We also note that the use of cross-fitting avoids the Donsker
conditions that are typically required when the nuisance parameters are
estimated with data-adaptive methods~\citep{klaassen1987consistent,
zheng2011cross, chernozhukov2016double}.

The weak convergence established in Theorem~\ref{theo:asos} is useful to derive
confidence intervals. Under the assumptions of the theorem, an estimator
$\hat\sigma^2$ of $\sigma^2$ may be obtained as the empirical variance of
$D_{\hat\eta_{j(i)}}(O_i)$, allowing a Wald-type confidence interval to be
constructed as $\thetaos\pm z_{1-\alpha/2} \cdot (\hat\sigma/\sqrt{n})$.

While the one-step estimator has optimal asymptotic performance, its
finite-sample behavior may be adversely affected by the inverse probability
weighting involved in computation of $D_{\hat\eta}(O_i)$. In particular,
$\thetaos$ is not guaranteed to remain within the bounds of the parameter space.
This issue may be attenuated by performing weight stabilization. The estimated
EIF $D_{\hat\eta_{j(i)}}(O_i)$ can be weight-stabilized by dividing term
(\ref{eq:DY}) by the empirical mean of $\one\{A_i=a'\}/\hat \g_{j(i)}(A_i\mid
W_i)\times \hat \h_{j(i)}^{\star}(A_i,Z_i,M_i,W_i)$, dividing term (\ref{eq:Du})
by the empirical mean of $\one\{A_i=a'\}/\hat \g_{j(i)}(A_i\mid W_i)$, and
dividing term (\ref{eq:Dv}) by the empirical mean of $\one\{A_i=a^{\star}\}/
\hat \g_{j(i)}(A_i\mid W_i)$. Theorem~\ref{theo:asos} remains valid under this
weight stabilization, and inference may be performed using the normal
distribution and the standard error estimator based on the EIF. The targeted
minimum loss (TML) estimation framework provides a more principled way to obtain
estimators that remain in the parameter space. The TML estimator is constructed
by tilting an initial data-adaptive estimator $\hat\eta$ towards a solution
$\tilde\eta$ of the estimating equation $\Pn D_{\tilde\eta}
= \theta(\tilde\eta)$, where $\theta(\tilde\eta)=\thetatmle$ is the resulting
substitution estimator obtained by plugging in the estimates $\tilde\eta$ in the
parameter definition (\ref{eq:thetadef}). Since it is a substitution estimator,
a TML estimator respects the bounds of the parameter space by definition. The
fact that the nuisance estimators solve the relevant estimating equation is used
to obtain a weak convergence result analogous to Theorem~\ref{theo:asos}. Thus,
while the TML estimator is expected to attain the same optimal asymptotic
behavior as the one-step estimator, its finite-sample behavior may be better by
virtue of its being a substitution estimator.

\subsection{Targeted minimum loss estimator}\label{tmle}

We focus the presentation of the TML estimator on the case of a binary variable
$Z$ and an outcome $Y\in[0,1]$, such as in our motivating example. If the
outcome is bounded with known bounds $[a,b]$, then a simple transformation
yields $Y\in[0,1]$. For binary $Z$, the EIF further simplifies as
\begin{align}
  D_\eta(o) - \vv(a^{\star},w)&= \frac{\one\{a=a'\}}
  {\g(a'\mid w)}\h(a',z,m,w)\{y - \m(a',z,m,w)\}\label{sc:Y}\\
                        & + \frac{\one\{a=a'\}}{\g(a'\mid w)}\{\uu(1,a',w)-
                        \uu(0,a',w)\}\left\{z -
                          \q(1\mid a',w)\right\}\label{sc:Z}\\
                        & + \frac{\one\{a=a^{\star}\}}{\g(a^{\star}
                        \mid w)}\left\{\sum_{z=0}^1\m(a',z,m,w)\q(z
                        \mid a',w)-\vv(a^{\star},w)\right\}\label{sc:M},
\end{align}
where we note that terms (\ref{sc:Y}), (\ref{sc:Z}), and (\ref{sc:M}) are score
equations of the type $H\{X - \E(X\mid L)\}$ for appropriately defined random
variables $H$, $X$, and $L$. A cross-fitted TML estimator~\citep{zheng2011cross}
may thus be computed in the following steps by iterating through solutions of
these score equations in logistic tilting models:
\begin{enumerate}[label=Step \arabic*., align=left, leftmargin=*]
\item Initialize $\tilde\eta =\hat\eta$.
\item \label{step:computeH} For each subject, compute the auxiliary covariates
  \begin{align*}
    H_{Y,i} &= \frac{\one\{A_i=a'\}}{\tilde\g_{j(i)}(a^{\star}\mid
              W_i)}\tilde \h^{\star}(A_i,Z_i,M_i,W_i)\\
    H_{Z,i} & = \frac{\one\{A_i=a'\}}{\tilde\g_{j(i)}(a'\mid W_i)}
    \{\tilde\uu_{j(i)}(1,A_i,W_i)-\tilde\uu_{j(i)}(0,A_i,W_i)\}
  \end{align*}
\item \label{step:fit} Fit the logistic tilting models
  \begin{align*}\logit \m_\beta(A_i,Z_i,M_i,W_i) &= \logit \widetilde
    \m_{j(i)}(A_i,Z_i,M_i,W_i) + \beta_Y H_{Y,i},\\
    \logit \q_\beta(1\mid A_i,W_i) &= \logit \tilde
    \q_{j(i)}(1\mid A_i,W_i) + \beta_Z H_{Z,i},
  \end{align*}
  where $\logit(p) = \log\{p(1-p)^{-1}\}$. Here, $\logit \widetilde\m(a,z,m,w)$
  and $\logit \tilde\q(a,w)$ are offset variables (i.e., a variable with known
  parameter value equal to one). The parameter $\beta$ may be estimated by
  running standard logistic regression models. For example, $\beta_Y$ may be
  estimated by running a logistic regression of $Y_i$ on $H_Y(A_i,Z_i,M_i,W_i)$
  with no intercept term and an offset term equal to $\logit
  \widetilde\m_{j(i)}(A_i,Z_i,M_i,W_i)$. Let $\hat\beta$ denote the estimate,
  and let $\widetilde \m=\m_{\hat\beta}$ and $\tilde \q=\q_{\hat\beta}$ denote
  the updated estimates.
\item \label{step:itera} Repeat \ref{step:computeH}-\ref{step:fit} until
  convergence. We stop the iteration when the score equations are solved up to
  a factor of $\{\sqrt{n}\log(n)\}^{-1}$ multiplied by their standard deviation.
\item Letting $(\widetilde \m, \tilde \q)$ denote the estimators in the
  last step of the above iteration, compute the pseudo-outcome
  $\widetilde Y$ as \[\widetilde Y_i = \sum_{z \in \{0, 1\}}
  \widetilde \m_{j(i)}(a',z,M_i,W_i) \tilde \q_{j(i)}(z\mid a',W_i).\]
  Let $\tilde \vv_{j(i)}(a,w)$ denote a cross-fitted estimator constructed by
  regressing $\widetilde Y$ on $(A,W)$, using data-adaptive methods. Compute the
  auxiliary variable \[H_{M,i}=\frac{\one\{A_i=a^{\star}\}}{\tilde
  \g_{j(i)}(A_i\mid W_i)}\]
\item Fit the logistic tilting model \[\logit \vv_\beta(A_i,W_i) = \logit
  \tilde \vv_{j(i)}(A_i,W_i) + \beta_M H_{M,i},\] by running a logistic
  regression of $\widetilde Y_i$ on $H_M(A_i,W_i)$ with without an intercept and
  with an offset term equal to $\logit \tilde\vv_{j(i)}(A_i,W_i)$. Let
  $\tilde\vv=\vv_{\hat\beta}$ denote the updated estimator. The TML estimator of
  $\theta$ is then defined \[\thetatmle = \frac{1}{n}\sum_{i=1}^n\tilde
  \vv_{j(i)}(a^{\star},W_i).\]
\end{enumerate}
Note that we stop the iteration once the relevant score equations are solved up
to a factor of $\{\sqrt{n}\log(n)\}^{-1}$. This ensures that the error in
solving the estimating equations is $o_P(n^{-1/2})$, so that
Theorem~\ref{theo:astmle} holds, while also improving small sample behavior by
avoiding unnecessary extra fitting. The large sample distribution of our
TML estimator is given in the following theorem:
\begin{theorem}[Weak convergence of the TML estimator]\label{theo:astmle}
  Assume $Z$ is binary. Assume \ref{ass:pos} and \ref{ass:so} as in
  Theorem~\ref{theo:asos} with $\hat\eta$ replaced by
  $\tilde\eta$. Then
  $n^{1/2}(\thetatmle-\theta)\rightsquigarrow N(0,\sigma^2)$, where
  $\sigma^2=\var\{D_\eta(O)\}$ is the non-parametric efficiency bound.
\end{theorem}

The proof of this theorem is presented in the \href{sm}{Supplementary
Materials}. Broadly, the proof proceeds as follows. First, inclusion of the
covariates $H_Y,H_Z,H_M$ guarantees that the tilting model
$\{\m_\beta,\q_\beta,\vv_\beta:\beta\}$ generates a score spanning the EIF. This
is used to show that the estimator solves the EIF estimating equation. The proof
then proceeds using similar arguments as the proof of Theorem~\ref{theo:asos}
for the one-step estimator, using empirical process theory and leveraging
cross-fitting to avoid entropy conditions on the initial estimators of $\eta$.
Since $D_{\tilde \eta_j}$ depends on the full sample through the estimates of
the parameters $\beta$ of the logistic tilting models, the empirical process
treatment is slightly different from that of Theorem~\ref{theo:asos}. As with
the one-step estimator, the Wald-type confidence interval constructed as
$\thetatmle\pm z_{1-\alpha/2} \cdot (\hat\sigma/\sqrt{n})$ is expected to have
asymptotically correct coverage under the conditions of the theorem.

We note that the TML estimator we developed is only applicable to cases with
a binary intermediate confounder. The case of a continuous intermediate
confounder may also be incorporated by using fluctuation models for continuous
variables such as those previously introduced by~\citet{Diaz12,
diaz2015targeted}. The case of a multivariate $Z$ would require a multivariate
fluctuation model and is the subject of future research.

\section{Numerical studies}\label{sims}
We now turn to evaluating the performance of the proposed one-step and TML
estimators of the interventional (in)direct effects. For simplicity, we assume
a binary intermediate confounder $Z$. We evaluate the estimators with synthetic
data from the following joint distribution of $O$:
\begin{align*}
  W_1 &\sim \text{Bernoulli}\{0.6\}\\
  W_2\mid W_1 &\sim \text{Bernoulli}\{0.3\}\\
  W_3\mid (W_2, W_1) &\sim \text{Bernoulli}\{0.2 + ((W_1 + W_2) / 3)\}\\
  A \mid W &\sim \text{Bernoulli}\{ \expit(0.25 \cdot (W_1 + W_2 + W_3) +
     3 W_1 \cdot W_2 - 2 \}\\
  Z \mid (A, W) &\sim \text{Bernoulli}\{\expit(((W_1 + W_2 + W_3) / 3) - A -
    A W_3 - 0.25)\}\\
  M \mid (Z, A, W) &\sim \text{Bernoulli}\{\expit(W_1 + W_2 + A - Z +
     A \cdot Z - 0.3 A \cdot W_2) \}\\
  Y \mid (M, Z, A, W) &\sim \text{Bernoulli}\{\expit((A - Z + M - A \cdot Z)
    / (W_1 + W_2 + W_3 + 1))\},
\end{align*}
where $\expit(x)=\{1+\exp(x)\}^{-1}$. For each sample size $n \in \{200, 800,
1800, 3200, 5000\}$, we generated $500$ datasets from the above data-generating
mechanism. For each dataset, the two proposed estimators were computed under six
different scenarios: all nuisance parameters $(\m, \g, \allowbreak \q, \e, \rr,
\uu, \vv)$ estimated consistently, and five additional scenarios in which each
of $(\m, \g, \q, \e, \rr)$ were replaced by an inconsistent estimator.
Consistent estimators were constructed via a Super Learner
ensemble~\citep{vdl2007super} that included two variants of the highly adaptive
lasso~\citep{benkeser2016highly} as well as distinct main-terms and
intercept-only logistic regression models. Inconsistent estimation was achieved
via an intercept-only logistic regression model. Our efficient estimators were
computed using our open source \texttt{medoutcon} \texttt{R}
package~\citep{hejazi2020medoutcon}, freely available at
\url{https://github.com/nhejazi/medoutcon}, while nuisance parameter estimates
relied upon the \texttt{sl3}~\citep{coyle2020sl3} and
\texttt{hal9001}~\citep{coyle2019hal9001} \texttt{R} packages, for the Super
Learner and the highly adaptive lasso, respectively.

Figure~\ref{fig:simula} displays the results of our simulation experiments. We
assess the relative performance of the two estimators in terms of several
metrics: absolute bias scaled by $n^{1/2}$;
the mean squared error (MSE), scaled by $n$, relative to the efficiency bound;
and the coverage of 95\% confidence intervals.
Figure~\ref{supp:fig:simula_sm}, in the \href{sm}{Supplementary Materials},
analogously depicts the relative performance of the estimators in terms of
absolute bias and the coverage of 99\% confidence intervals. In terms of scaled
bias, the simulation results corroborate the results of
Lemma~\ref{lemma:robust}. The only case in which the estimators are expected to
be inconsistent is when $\q$ is inconsistent, in which case both $\uu$ and $\vv$
are also inconsistent, thus failing to satisfy the conditions of the lemma.
These simulations also corroborate the results of Theorem~\ref{theo:asos} in the
sense that the scaled bias is expected to vanish or stabilize asymptotically
when all nuisance estimators are estimated consistently. That this appears true
for the case of an inconsistent $\rr$ should be taken as a particularity of this
simulation, not to be expected in general. Likewise, the scaled MSE converges to
the efficiency bound of the model when all nuisance estimators are consistent,
as predicted by Theorem~\ref{theo:asos}. Interestingly, coverage of confidence
intervals close to the nominal level was achieved in many simulation scenarios
--- this property is also only to be expected in general if all nuisance
parameters are consistently estimated. This is possibly due to anti-conservative
estimation of the variance, a phenomenon previously observed for variance
estimators using empirical influence functions (possible solutions have recently
been presented by~\citet{tran2018robust}). Reassuringly, the coverage of the
confidence intervals for the all-consistent case achieves the nominal level, as
would be expected from theory.
\begin{figure}[H]
  \hspace{-4.2em}
  \flushleft
  \centering
  \includegraphics[scale = 0.3]{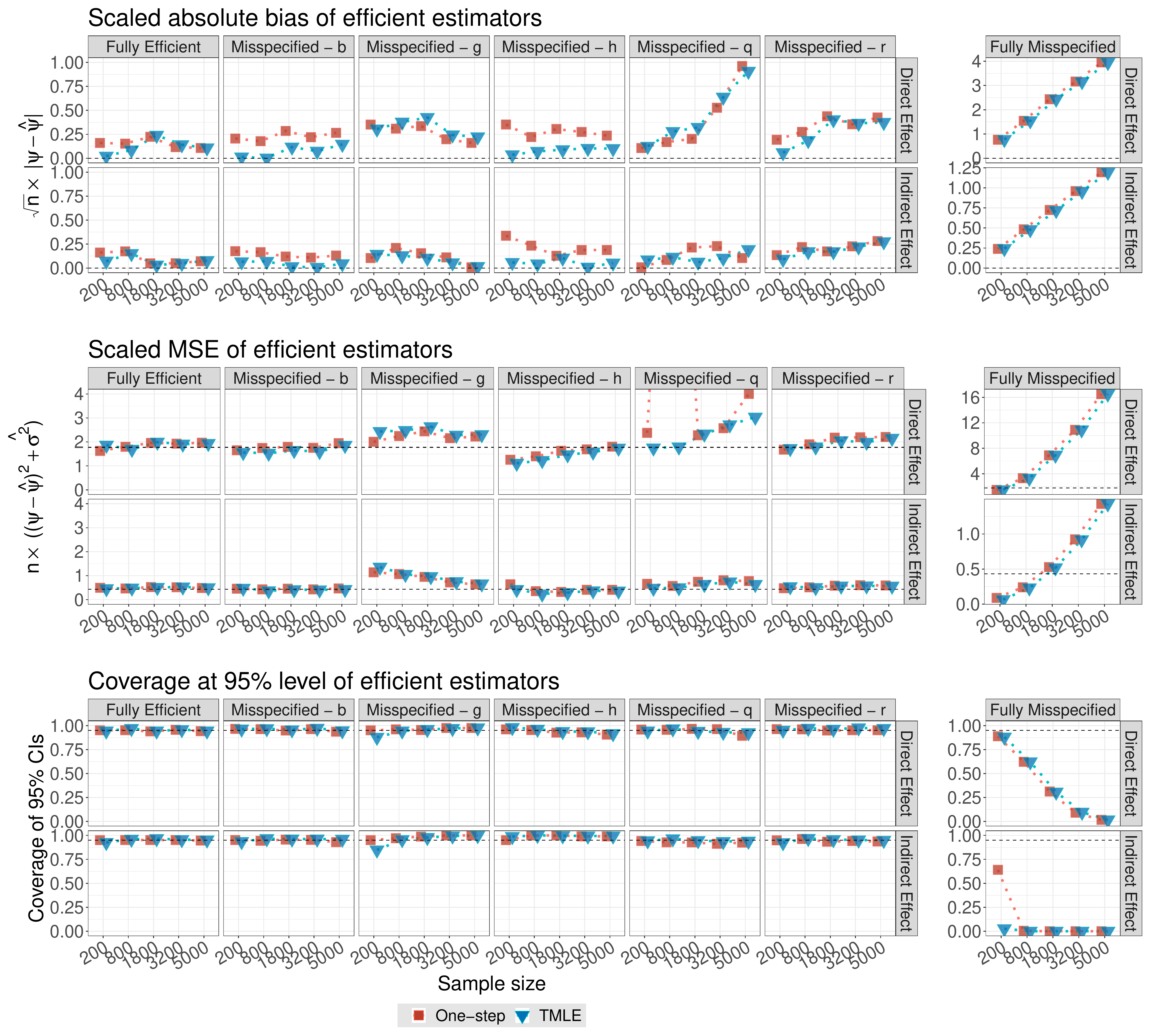}
  \caption{Comparison of efficient estimators across different nuisance
    parameter configurations.}\label{fig:simula}
\end{figure}
Of note, the TML estimator seems to perform uniformly better than the one-step
estimator. For example, in the case where all nuisance parameters are correctly
estimated, in which both estimators have the same asymptotic behavior, the TML
estimator exhibits a superior small-sample bias-variance trade-off. Even in
cases where consistency is unexpected, such as when $\q$ is inconsistently
estimated, the TML estimator offers a better MSE. A particularly interesting
instance in which the one-step estimator performs poorly seems to be the case in
which $\g$ or $\e$ are inconsistent.
This is a somewhat puzzling result, due to possible irregularity of the
estimators in the absence of all-consistent estimation of nuisance functions.
This suboptimal behavior is attenuated for the TML estimator in terms of both
bias and MSE.

\section{Empirical Illustration}\label{application}
\subsection{Overview and set-up}
We now apply our proposed estimator to estimate indirect effects of the
randomized receipt of a Section 8 housing voucher as a young child on subsequent
risky behavior in adolescence, through mediators related to the school
environment, and direct effect not operating through those mediators. Our
illustrative data come from the Moving to Opportunity (MTO) study,
a longitudinal, randomized trial in which families living in high-rise public
housing in five U.S.~cities were randomized to receive a Section 8 housing
voucher or not~\citep{kling2007experimental}. Such a voucher could then be used
to move out of public housing and into a rental on the private market.
Participating families were followed up at two separate times over the
subsequent 10--15 years.

We use baseline information for covariates, $W$, and treatment assignment, $A$.
Covariates include individual and family sociodemographic information,
motivation for participation, neighborhood perceptions, and relevant
interactions. Moving with the voucher out of public housing represents the
intermediate confounder, $Z$. The school environment variables that comprise $M$
were assessed over the intervening 10--15 years between baseline and the final
follow-up visit; these are presented in Table~\ref{tab:examp}. We consider the
outcome of risky behavior measured at the final follow-up, $Y$, measured using
the behavior problems index (BPI), which is calculated based on the proportion
of risky behaviors endorsed~\citep{zill1990behavior}.

We restrict to girls in the Boston, Chicago, and New York City sites ($N=1749$),
as previous work has shown heterogeneities by
sex~\citep{osypuk2012gender,rudolph2018mediation} and study
site~\citep{rudolph2018composition}, and because the total effects for girls in
this subset of sites were similar. We present results from this restricted
analysis using one imputed dataset~\footnote{Multiple imputation by chained
equations~\citep{buuren2010mice} was used to create $30$ imputed datasets,
predicting missing values in the order of the data (i.e., $W, A,Z, M, Y$). The
first imputed dataset was used, as this analysis was for illustrative purposes.}
to address missing data instead of completing a full stratified analysis using
multiple imputed datasets, as our goal is to illustrate the proposed method.
A comprehensive mediation analysis is the subject of future substantive work.

We use machine learning to flexibly and data-adaptively model the components of
Equations (\ref{sc:Y})--(\ref{sc:M}). Specifically, using the Super Learner
ensembling procedure, we build an ensemble prediction algorithm as the convex
combination of algorithms in a user-supplied library in order to minimize the
5-fold cross-validated prediction error~\citep{vdl2007super}. The chosen
algorithm library includes $\ell_1$-penalized
regression~\citep{tibshirani1996regression}, generalized linear models, and
multivariate adaptive regression splines~\citep{friedman1991multivariate}.

\begin{table}[ht]
\centering
\caption{Mediators included by group. }
\label{tab:examp}
\begin{tabular}{|l|p{12cm}|}
  \hline
  Mediator Group    & Mediators Included                                                                                                                                                                                     \\ \hline
  Binary, one       & Attended $>4$ schools                                                                                                                                                                                  \\
  Binary, fewer     & `Binary, one' AND  ever attended a school ranked above the 50\textsuperscript{th} percentile, attended schools where more than 75\% of students received free or reduced price lunch, ever attended a non-Title I school \\
  Continuous, fewer & Average rank of schools attended, average percentage of students receiving free or reduced price lunch, number of schools attended, percentage of schools attended that were Title I                   \\
  Continuous, all   & `Continuous, fewer' AND number of moves since baseline, student to teacher ratio, school attended at follow-up is in a different school district than baseline                                           \\
  \hline
\end{tabular}
\end{table}

\subsection{Results}

We apply the proposed one-step and targeted minimum loss estimators (TMLE) to
estimate the (a) indirect effects of voucher receipt on risky behavior in
adolescence, through various aspects of the school environment, and the (b)
direct effects not operating through the school environment, shown in
Figure~\ref{fig:examp}. Standard errors and confidence intervals are computed
based on the results of Theorems~\ref{theo:asos} and~\ref{theo:astmle}.

Generally, results for the one-step and TML estimators are similar in this
example. The indirect effect through attending more than four schools over the
duration of follow-up is estimated as an increase in
0.113\% (95\%CI: -0.109\%, 0.336\%) by the TML estimator; the indirect effects
estimated including the combination of four binary mediators listed in
Table~\ref{tab:examp} are similar. The indirect effects estimated using the
continuous versions of these four mediators are slightly stronger: (risk
difference (RD): 0.205\%, 95\%CI: -0.038\%-0.448\%). Lastly, we estimated
indirect effects using the combination of seven continuous mediators listed in
Table~\ref{tab:examp}. These indirect effects are stronger still, but with
much wider confidence intervals (RD: 0.459\%, 95\%CI: -0.247\%, 1.165\%).

Thus, the mechanisms we examine related to the school environment were
universally found to unintentionally result in more risky behavior in adolescent
girls whose families were randomized to receive a housing voucher.
This harmful indirect effect is largely attributable to
increased school instability (i.e., increased likelihood of attending more
schools over the intervening 10--15 years).
The corresponding direct effect point estimates are negative, meaning that the
effect of voucher receipt not operating through aspects of the school
environment/instability is estimated to result in less risky behavior in
adolescence. For example, the direct effect of voucher receipt not operating
through the continuous, smaller group of mediators decreases risky behavior by
approximately 3.246\% (95\% CI: -7.264, 0.420).


\begin{table}[ht]
\centering
\caption{Indirect and direct effect estimates and 95\% confidence
     intervals in illustrative example.}
\label{fig:examp}
\begin{tabular}{|l | l | c c|}
   \hline
  Mediator Group    & Estimator & Indirect effect (95\% CI)  & Direct effect  (95\% CI)\\ 
  \hline
Continuous, all   & One-step & 0.37 (-0.33, 1.07) & -3.16 (-7.00, 0.69)  \\
                  & TMLE     & 0.46 (-0.25, 1.17) & -3.45 (-7.31, 0.40)  \\
Continuous, fewer & One-step & 0.21 (-0.03, 0.45) & -3.42 (-7.26, 0.42)  \\
                  & TMLE     & 0.21 (-0.04, 0.45) & -3.26 (-7.10, 0.57)  \\
Binary, fewer     & One-step & 0.11 (-0.14, 0.35) & -3.32 (-7.14, 0.50)  \\
                  & TMLE     & 0.10 (-0.14, 0.35) & -3.18 (-7.00, 0.63) \\
Binary, one       & One-step & 0.16 (-0.06, 0.38) & -3.45 (-7.22, 0.31)  \\
                  & TMLE     & 0.16 (-0.06, 0.38) & -3.45 (-7.21, 0.31)  \\ \hline
\end{tabular}
\end{table}

\section*{Supplementary material}

Supplementary material available online includes proofs of presented results.


\bibliographystyle{biometrika}
\bibliography{refs}
\end{document}